\documentclass[12pt]{article}

\usepackage[a4paper, total={6.8in, 10in}]{geometry}

\RequirePackage{amsfonts, amssymb, amsthm, amsopn}
\RequirePackage{amsmath}
\RequirePackage{bm} 

\RequirePackage{pifont}
\RequirePackage{mathrsfs}
\RequirePackage[retainorgcmds]{IEEEtrantools}

\RequirePackage{fix-cm} 

\usepackage{xcolor}
\usepackage{xargs}

\usepackage[IEEEtran]{research18}
 \usepackage[caption=false,font=footnotesize]{subfig}

\begin{document}

\newcommand{\delb}{\epsilon}

\title{Chernoff-Stein-Type Exponent in Testing Between Two Outlier Distributions}

\author{Ligong Wang\thanks{The author is with the Department of Information Technology and Electrical Engineering, ETH Zurich, 8092 Zurich, Switzerland (email: ligwang@isi.ee.ethz.ch).}}

\date{}

\newcommand{\PX}{\set{P}(\set{X})}
\newcommand{\mH}{\const{H}}
\newcommand{\thecon}{\hat{\theta}_2}

\maketitle

\begin{abstract}
  Among $2^{nR}$ length-$n$ random sequences, one sequence is an outlier whose index is random. Under hypothesis $\mH_0$, the components of the outlier are independent and identically distributed (IID) according to $Q_0$, whereas under hypothesis $\mH_1$ they are IID according to $Q_1$. The remaining $(2^{nR}-1)$ sequences are mutually independent, independent of the outlier, and IID according to $P$ under both hypotheses. Based on the observation of all $2^{nR}$ sequences, one wishes to decide between $\mH_0$ and $\mH_1$. Under the constraint that the decision error probability under $\mH_0$ must be bounded away from $1$, we determine the fastest exponential decay rate of the decision error probability under $\mH_1$.
\end{abstract}

\section{Introduction}

In binary hypothesis testing, one wishes to decide between two alternative explanations for observed data. A classical setup is as follows. Under the null hypothesis $\mH_0$, the components of the random sequence $X^n$ are independent and identically distributed (IID) according to a probability mass function (PMF) $Q_0$ on finite alphabet $\set{X}$. Under the alternative hypothesis $\mH_1$, the components are IID according to another PMF $Q_1$ on $\set{X}$. Let $\phi$ denote the test function that maps the realization of $X^n$ to either $\mH_0$ or $\mH_1$. Let $\alpha$ denote the probability of a decision error under $\mH_0$, i.e., it is the probability that $\phi(X^n)=\mH_1$ when $\mH_0$ is true, and let $\beta$ denote the probability of a decision error under $\mH_1$. The Chernoff-Stein lemma \cite{coverthomas06} asserts that, if $\alpha$ must be bounded away from $1$, then the fastest achievable exponential decay rate of $\beta$ with respect to $n$ is given by the relative entropy $D(Q_0\|Q_1)$. Formally, for all $\bar{\alpha}\in(0,1)$,
\begin{equation}\label{eq:beta}
\lim_{n\to\infty} \frac{1}{n} \log \left(\min_{\alpha\le \bar{\alpha}} \beta\right) = - D(Q_0\|Q_1),
\end{equation}
where
\begin{equation}
D(Q_0\|Q_1) = \begin{cases} \displaystyle \sum_{x\in\set{X}} Q_0(x) \log \frac{Q_0(x)}{Q_1(x)}, & Q_0 \ll Q_1\\ \infty, & \textnormal{otherwise.} \end{cases}
\end{equation}
Here and throughout, logarithms are to base $2$.

The present work also studies binary hypothesis testing between some sequence $X^n$ being IID $Q_0$ and it being IID $Q_1$, but $X^n$ is now observed together with a large number of other sequences that are IID according to another PMF $P$. Specifically, there are $2^{nR}$ sequences, among which one is IID according to either $Q_0$ or $Q_1$ depending on the hypothesis, while all other sequences are IID according to $P$ under both hypotheses. It is assumed that the index of the hypothesis-dependent sequence---which we call the ``outlier''---is not known a priori. We also seek the fastest exponential decay rate of $\beta$ (error probability under $\mH_1$) when $\alpha$ (error probability under $\mH_0$) must be bounded away from $1$. Note that the setup allows $P$ to be the same as either $Q_0$ or $Q_1$, in which case the problem becomes one of deciding whether an outlier exists or not.

The main result of this work is a closed-form expression of said exponent expressed in terms of $P$, $Q_0$, $Q_1$, and $R$; see Theorem~\ref{thm:main} ahead.

The problem we study is quite different from the one that is usually called ``outlier hypothesis testing'' in the literature \cite{liniveeravalli14}. In that problem, one also observes a large number of sequences, of which most are IID according to a ``typical'' PMF, while some (possibly more than one) sequences are IID according to an ``outlier'' PMF. The goal there is usually to identify the indices of all the outliers. 
In our setup, determining the index of the outlier is not part of the task.

The present work is related to \emph{distributed hypothesis testing} \cite{ahlswedecsiszar86,han87}, in particular, to schemes there that employ \emph{random binning} \cite{shimokawahanamari94,kochmanwang25}; we shall return to this subject towards the end of the paper. 
In fact, the decision rule we propose (and hence also the proof of the direct part of the main theorem) is nearly identical to the one in \cite{kochmanwang25}. The converse proof here is new. We introduce a notion of typicality for an ensemble of sequences whose total number is exponentially large in the sequence length. We then compare the probability of a ``typical ensemble'' under both hypotheses to derive upper bounds on the probability of decision errors.

The rest of this paper is organized as follows. Section~\ref{sec:result} formally sets up the problem and presents the main result; Section~\ref{sec:typicality} briefly reviews types and typical sequences; Section~\ref{sec:typicalcoll} introduces the notion of typical ensembles; Sections~\ref{sec:direct} and~\ref{sec:converse} prove the direct and the converse parts of the main result, respectively; and Section~\ref{sec:conclusion} concludes the paper with some remarks.

\section{Setup and Result}\label{sec:result}

Let $\set{X}$ be a finite set, and let $\PX$ denote the probability simplex on $\Reals^{|\set{X}|}$. Fix three PMFs
\begin{equation}
P,Q_0,Q_1\in\PX.
\end{equation}
A ensemble of $\lceil 2^{nR} \rceil$ length-$n$ random sequences
\begin{equation}\label{eq:randomensemble}
\{ X^n(1),\ldots, X^n(\lceil 2^{nR} \rceil)\}
\end{equation}
is generated as follows. An index $K$ is chosen uniformly at random from $\{1,\ldots,\lceil 2^{nR} \rceil\}$. All components of all sequences in the ensemble except those of $X^n(K)$ are IID $\sim P$ and independent of $K$. Under hypothesis $\mH_0$, the components of $X^n(K)$ are IID $\sim Q_0$, whereas, under hypothesis $\mH_1$, they are IID $\sim Q_1$; under both hypotheses, $X^n(K)$ is independent of $K$ and of all the other sequences in the ensemble. 

A hypothesis test $\phi$ maps the realization of \eqref{eq:randomensemble} to either $\mH_0$ or $\mH_1$. We require the probability of a decision error under $\mH_0$ to be bounded away from $1$, and seek the fastest decay rate of the decision error probability under $\mH_1$. Our main result is the following.

\begin{theorem}\label{thm:main}
Let $p(n,\bar{\alpha})$ denote the smallest attainable error probability under $\mH_1$ when the decision error probability under $\mH_0$ is at most $\bar{\alpha}$, i.e.,
\begin{equation}
p(n,\bar{\alpha}) \triangleq \min_\phi \Pr \left[ \phi\big(\{ X^n(1),\ldots, X^n(\lceil 2^{nR} \rceil)\} \big) = \mH_0 \,\middle|\, \mH_1 \textnormal{ true}\right]
\end{equation}
subject to $\Pr \left[ \phi(\{ X^n(1),\ldots, X^n(\lceil 2^{nR} \rceil)\} ) = \mH_1 \,\middle|\, \mH_0 \textnormal{ true}\right] \le \bar{\alpha}$. If
\begin{equation}
P \ll Q_1,
\end{equation}
then, for all $\bar{\alpha}\in(0,1)$,
\begin{IEEEeqnarray}{rCl}
\label{eq:main}
\lim_{n\to\infty} \frac{1}{n} \log p(n,\bar{\alpha}) & = & - \min\{ \theta_1,\theta_2\}
\end{IEEEeqnarray}
where 
\begin{IEEEeqnarray}{rCl}
\theta_1 & \triangleq & D(Q_0\| Q_1)\\
\theta_2 & \triangleq & \bigl( D(Q_0\|P) - R\bigr)^+ + \min_{P'\colon D(P'\|P)\le R} D(P'\|Q_1). \label{eq:theta2}
\end{IEEEeqnarray}
\end{theorem}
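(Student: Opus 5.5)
We prove achievability and converse separately. For the direct part, the exponent $\theta_1$ and the exponent $\theta_2$ come from two separate mechanisms, and we use a single threshold test that combines them. Fix small $\delb>0$ and declare $\mH_0$ if and only if there is an index $k$ such that $X^n(k)$ is $\delb$-typical with respect to $Q_0$. Under $\mH_0$ the true outlier $X^n(K)$ is typical with probability tending to one, so $\alpha\to0$.

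Under $\mH_1$, an error occurs in one of two ways:
\begin{itemize}
\item The outlier itself, which is IID $Q_1$, has a $Q_0$-typical type. This has probability $2^{-n(D(Q_0\|Q_1)-o(1))}$, which yields $\theta_1$.
\item Some non-outlier, IID $P$, has a $Q_0$-typical type.
\end{itemize}
In the second case, the expected number of such sequences is $2^{n(R-D(Q_0\|P)+o(1))}$. When $R<D(Q_0\|P)$ this gives probability $2^{-n(D(Q_0\|P)-R)}$, but that alone does not produce the $\min_{P'}D(P'\|Q_1)$ term. To obtain the full $\theta_2$, we refine the test in the spirit of \cite{kochmanwang25}. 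We also inspect the empirical distribution of the whole ensemble, and additionally declare $\mH_0$ only when the ensemble is ``typical'' in the sense of Section~\ref{sec:typicalcoll}. The required event under $\mH_1$ is then the intersection of two events: a $P$-sequence looks $Q_0$-typical (cost $(D(Q_0\|P)-R)^+$), and the outlier has some type $P'$ that is "hidden" among the $2^{nR}$ $P$-sequences. The outlier is hidden when the number of $P$-sequences of type $P'$ is exponentially nonzero, i.e.\ $D(P'\|P)\le R$, and this costs $D(P'\|Q_1)$. The natural refined test is therefore: declare $\mH_1$ if some sequence has a type $P'$ that is atypical for the ensemble (e.g., $D(P'\|P)>R+\delb$, or occurring more often than expected). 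Otherwise declare $\mH_0$ if a $Q_0$-typical sequence is present. A union bound over polynomially many types, using the method of types (Section~\ref{sec:typicality}), then gives $\beta\le 2^{-n(\min\{\theta_1,\theta_2\}-o(1))}$. Here $P\ll Q_1$ guarantees that $D(P'\|Q_1)$ is finite on the relevant set.

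For the converse, we show that no test can beat either exponent.
\begin{itemize}
\item \emph{Bound by $\theta_1$.} A genie reveals $K$ and the non-outlier sequences. This reduces the problem to the classical Chernoff–Stein lemma \eqref{eq:beta}, so $\beta\gtrsim 2^{-n\theta_1}$.
\item \emph{Bound by $\theta_2$.} Define a set $\set{T}$ of typical ensembles that has probability close to one under $\mH_0$. Since $\alpha\le\bar\alpha<1$, the acceptance region intersected with $\set{T}$ has $\mH_0$-probability at least $(1-\bar\alpha)/2$. We then show that for every ensemble $e\in\set{T}$,
\[
\Pr[e\mid\mH_1]\ge 2^{-n(\theta_2+o(1))}\Pr[e\mid\mH_0].
\]
This pointwise comparison is done through a likelihood-ratio computation. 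The likelihood ratio is an average over $k$ of $\frac{Q_1(x^n(k))}{P(x^n(k))}$ versus the same average with $Q_0$ in place of $Q_1$. In a typical ensemble, the numerator is at least the contribution of the $\approx2^{n(R-D(P'\|P))}$ sequences of each type $P'$ with $D(P'\|P)\le R$. Optimizing over $P'$ yields $2^{-nR}\max_{P'}2^{n(R-D(P'\|P))}2^{n(D(P'\|P)-D(P'\|Q_1))}$, which equals $2^{-n\min D(P'\|Q_1)}$. The denominator is at most of order $2^{-nR}2^{nD(Q_0\|P)}+1$ on typical ensembles, where the second term comes from the $P$-sequences. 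The ratio of these two bounds gives exactly $\theta_2$, and summing the pointwise bound over the acceptance region intersected with $\set{T}$ yields $\beta\gtrsim2^{-n\theta_2}$.
\end{itemize}

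The main obstacle is defining ensemble typicality so that two things hold simultaneously. First, $\set{T}$ must have $\mH_0$-probability tending to one. This requires controlling the counts of every type $P'$ with $D(P'\|P)<R$ around $2^{n(R-D(P'\|P))}$, and requiring no sequences of types with $D(P'\|P)>R+\delb$ other than the outlier. Second, the pointwise likelihood-ratio bound must hold uniformly on $\set{T}$. Types near the boundary $D(P'\|P)\approx R$, where counts are subexponential, need care; continuity of the minimum in \eqref{eq:theta2} in $R$ should handle them.
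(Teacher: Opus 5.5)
There is a genuine error in your achievability test. As written, you declare $\mH_1$ whenever \emph{any} sequence has a type with $D(\tau\|P)>R+\delb$. Equivalently, you require the whole ensemble, outlier included, to be typical in the sense of Section~\ref{sec:typicalcoll}. Otherwise you declare $\mH_0$ if some sequence is $Q_0$-typical.

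When $D(Q_0\|P)>R$---exactly the regime in which the term $(D(Q_0\|P)-R)^+$ is active---these two requirements are incompatible for small $\epsilon,\delb$. Every $x^n\in\set{T}_\epsilon^{(n)}(Q_0)$ has $D(\tau[x^n]\|P)\ge D(Q_0\|P)-\lambda(\epsilon)>R+\delb$, with $\lambda$ as in~\eqref{eq:lameps} (and $D(\tau[x^n]\|P)=\infty$ if $Q_0\not\ll P$). So under $\mH_0$ the true outlier itself trips the atypicality clause. Your test then outputs $\mH_1$ with probability tending to one, violating $\alpha\le\bar\alpha$. Your error analysis is inconsistent with the test for the same reason: the $P$-sequence that ``looks $Q_0$-typical'' at cost $(D(Q_0\|P)-R)^+$ would also be flagged.

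The missing step is to exempt the candidate sequence from the check, which is the paper's rule: declare $\mH_0$ iff there is an $\ell$ with $x^n(\ell)\in\set{T}_\epsilon^{(n)}(Q_0)$ and $D(\tau_j\|P)\le R+\delb$ for all $j\neq\ell$. Then $\alpha\to0$ by Proposition~\ref{prop:LLN}, Sanov's theorem and a union bound. An $\mH_1$ error then requires one of two things:
\begin{itemize}
\item a $Q_0$-typical outlier, with exponent $\theta_1$; or
\item the intersection of the two independent events you identify, namely some $P$-sequence being $Q_0$-typical and $D(\tau[X^n(K)]\|P)\le R+\delb$, with exponent $(D(Q_0\|P)-R)^+ +\min_{D(P'\|P)\le R+\delb}D(P'\|Q_1)\to\theta_2$.
\end{itemize}
The ``more often than expected'' clause is unnecessary.

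Your converse, by contrast, is the paper's argument. A genie revealing $K$ gives $\theta_1$. For $\theta_2$ the paper compares the two likelihoods pointwise on $\bigcup_i\set{B}_i$ (one $Q_0$-typical sequence, the rest a typical ensemble), just as you sketch. It bounds the $\mH_0$ likelihood ratio by roughly $2^{n(D(Q_0\|P)-R)}+2^{n\cdot o(1)}$, and bounds the $\mH_1$ ratio from below via the counts $\nu_a\approx 2^{n(R-d_a)}$.

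Your concern about types near $D(P'\|P)=R$ is well founded. The lower bound $\nu_a\ge 2^{n(R-d_a-\epsilon)}>0$ in Definition~\ref{def:typens} forces every cell to be occupied. This fails with high probability for cells with $d_a>R$, and the cell attaining $r^*$ typically lies near $D(P'\|P)=R+\epsilon$. The right way to handle this is to impose count lower bounds only where $D(P'\|P)\le R-\eta$. One then uses that $R\mapsto\min_{D(P'\|P)\le R}D(P'\|Q_1)$ is convex and finite (as $P\ll Q_1$), hence continuous for $R>0$.
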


\begin{IEEEproof}
The direct part is proven in Section~\ref{sec:direct} and the converse part  in Section~\ref{sec:converse}.
\end{IEEEproof}

\begin{remark}
As we shall see, the proof of the direct part of Theorem~\ref{thm:main} does not use the assumption $P\ll Q_1$. Thus, even if $P\not\ll Q_1$, we still have
\begin{equation}
\limsup_{n\to\infty} \frac{1}{n} \log p(n,\bar{\alpha}) \le  - \min\{ \theta_1,\theta_2\}.
\end{equation}
\end{remark}

\begin{remark}
If we take $R \downarrow 0$, then $\theta_2$ approaches $D(Q_0\|P) + D(P\|Q_1)$, which can be smaller than $\theta_1$, so the exponent $\min\{\theta_1,\theta_2\}$ does \emph{not} always reduce to the Chernoff-Stein exponent $\theta_1=D(Q_0\|Q_1)$. This is because, when $R \downarrow 0$, the problem we study reduces not to one where there is only one $X^n$-sequence that is IID $\sim Q_0$ or $Q_1$, but rather one where there are two sequences, one IID $\sim P$, and the other IID $\sim Q_0$ or $Q_1$. 
\end{remark}

\section{Review of Types and Typical Sequences}\label{sec:typicality}
Here we briefly review the notion of types, typical sequences, and some of their properties, and introduce some notation that we use. For more details, we refer the reader to \cite{csiszarkorner11}. 

\begin{definition}
Given a sequence $x^n\in\set{X}^n$, its {type}, or {empirical distribution}, is a PMF $\tau$ defined as follows: for every $\xi\in\set{X}$,
\begin{equation}
\tau( \xi) = \frac{1}{n} \sum_{i=1}^n \boldsymbol{1}\{ x_i = \xi\}.
\end{equation}
We sometimes denote the type of $x^n$ by $\tau[x^n]$.
\end{definition}

\begin{definition}
A sequence $x^n\in\set{X}^n$ is said to be $\epsilon$-typical with respect to distribution $P$ if its type $\tau$ satisfies, for all $\xi\in\set{X}$, 
\begin{equation}\label{eq:typical}
\bigl| \tau( \xi) - P(\xi) \bigr| \le \epsilon \cdot P(\xi).
\end{equation}
The set of all $n$-sequences that are $\epsilon$-typical with respect to $P$ is denoted $\set{T}_\epsilon^{(n)}(P)$.
\end{definition}

Note that \eqref{eq:typical} implies 
\begin{IEEEeqnarray}{rCl}
\supp(\tau)& = & \supp(P) \quad\textnormal{and}\quad
\|\tau-P\|_1  \le  \epsilon.
\end{IEEEeqnarray} 

We recall the following well-known results.

\begin{proposition}\label{prop:prob}
The probability that $X^n$ IID $\sim$P equals $x^n$ only depends on its type $\tau=\tau[x^n]$ and is given by
\begin{equation}
P^{\times n} (x^n) = 2^{-n \bigl( H(\tau) + D(\tau\| P)\bigr)}.
\end{equation}
\end{proposition}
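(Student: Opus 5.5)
The statement to be proved is Proposition~\ref{prop:prob}, which is a direct computation from the product form of the IID law followed by a rearrangement of the exponent. The plan is to group the factors of the product according to the symbol they carry, and then recognize the resulting exponent as $-n\bigl(H(\tau)+D(\tau\|P)\bigr)$.

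\emph{Step 1: regroup by symbol.} Write $N(\xi)=n\tau(\xi)$ for the number of occurrences of $\xi$ in $x^n$. By independence,
\begin{equation*}
P^{\times n}(x^n)=\prod_{i=1}^n P(x_i)=\prod_{\xi\in\set{X}} P(\xi)^{n\tau(\xi)} .
\end{equation*}
This already shows that the probability depends on $x^n$ only through $\tau$.

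\emph{Step 2: the case $\supp(\tau)\subseteq\supp(P)$.} Taking logarithms gives $\log P^{\times n}(x^n)=n\sum_{\xi}\tau(\xi)\log P(\xi)$, where the sum runs over $\supp(\tau)$ and we use the convention $0\log 0=0$. For each such $\xi$ we split the summand as
\begin{equation*}
\tau(\xi)\log P(\xi)=\tau(\xi)\log\tau(\xi)-\tau(\xi)\log\frac{\tau(\xi)}{P(\xi)} .
\end{equation*}
Summing over $\xi$, the right-hand side equals $-H(\tau)-D(\tau\|P)$. Multiplying by $n$ and exponentiating base $2$ yields the claimed formula.

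\emph{Step 3: the remaining case.} The only point needing care is when some $\xi$ has $\tau(\xi)>0$ but $P(\xi)=0$. Then $\tau\not\ll P$, so $D(\tau\|P)=\infty$ by the paper's definition, and the right-hand side is $2^{-\infty}=0$. On the left-hand side, the product contains the factor $0^{n\tau(\xi)}$ with positive exponent, so the probability is $0$ as well. The formula therefore holds in this case too, under the convention $2^{-\infty}=0$.

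There is no real obstacle. Apart from that support case, the only convention to check is that symbols with $\tau(\xi)=0$ contribute nothing, which follows from $P(\xi)^0=1$ and $0\log 0=0$.
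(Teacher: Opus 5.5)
Your proposal is correct and matches the standard argument: regroup $\prod_i P(x_i)$ as $\prod_{\xi} P(\xi)^{n\tau(\xi)}$, then split $\log P(\xi)=\log\tau(\xi)-\log\frac{\tau(\xi)}{P(\xi)}$ to read off $-n\bigl(H(\tau)+D(\tau\|P)\bigr)$. The paper gives no proof of its own and cites the result as well known from Csisz\'ar--K\"orner; your explicit handling of the case $\tau\not\ll P$, via $D(\tau\|P)=\infty$ and the convention $2^{-\infty}=0$, covers a point the paper leaves implicit.
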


\begin{proposition}\label{prop:LLN}
If $X^n$ is generated IID $\sim P$, then, for all $\epsilon>0$, as $n\to\infty$, the probability that $X^n$ is $\epsilon$-typical with respect to $P$ tends to one:
\begin{equation}
\lim_{n\to\infty} P^{\times n} \left[ X^n  \in\set{T}_\epsilon^{(n)}(P)  \right] = 1.
\end{equation}
\end{proposition}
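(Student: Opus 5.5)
The plan is to reduce the statement to a finite union of one-letter deviation events and control each one by Chebyshev's inequality; this is the weak law of large numbers applied to indicator variables. Because $\set{X}$ is finite, the union bound costs only a constant factor $|\set{X}|$, which does not depend on $n$.

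First consider a letter $\xi$ with $P(\xi)=0$. Almost surely no component of $X^n$ equals $\xi$, so $\tau(\xi)=0=P(\xi)$ with probability one. Condition~\eqref{eq:typical} therefore holds automatically for such $\xi$, and we only need to treat $\xi\in\supp(P)$.

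Now fix $\xi\in\supp(P)$ and write $\tau(\xi)=\frac1n\sum_{i=1}^n Z_i$, where $Z_i=\boldsymbol{1}\{X_i=\xi\}$. The $Z_i$ are IID Bernoulli with mean $P(\xi)$ and variance $P(\xi)(1-P(\xi))$. Chebyshev's inequality then gives
\begin{equation}
P^{\times n}\bigl[\,|\tau(\xi)-P(\xi)|>\epsilon P(\xi)\,\bigr] \le \frac{P(\xi)(1-P(\xi))}{n\,\epsilon^2 P(\xi)^2} = \frac{1-P(\xi)}{n\,\epsilon^2 P(\xi)}.
\end{equation}
This bound is finite precisely because $P(\xi)>0$, which is why the zero-probability letters had to be handled separately.

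Next apply the union bound over the at most $|\set{X}|$ letters in $\supp(P)$:
\begin{equation}
P^{\times n}\bigl[X^n\notin\set{T}_\epsilon^{(n)}(P)\bigr] \le \frac{1}{n\epsilon^2}\sum_{\xi\in\supp(P)}\frac{1-P(\xi)}{P(\xi)}.
\end{equation}
The right-hand side is a constant that depends only on $P$ and $\epsilon$, divided by $n$. It therefore tends to zero as $n\to\infty$, which proves the claim.

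There is no real obstacle here. The only points needing care are that the typicality condition is multiplicative, with tolerance $\epsilon P(\xi)$, so letters with $P(\xi)=0$ must be handled separately as above, and that the alphabet is finite, so the union bound is harmless.
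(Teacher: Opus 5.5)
Your proof is correct. It is the same weak-law-of-large-numbers argument the paper relies on: you handle zero-probability letters trivially, apply Chebyshev's inequality to each letter in the support with its multiplicative tolerance, and take a union bound over the finite alphabet. The paper itself gives no proof; it states this proposition as a well-known fact and refers to Csisz\'ar and K\"orner.
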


\begin{proposition}\label{prop:Stein}
For any $Q\in\PX$,
\begin{equation}
\lim_{\epsilon\downarrow 0} \lim_{n\to\infty} \frac{1}{n} \log \bigl( Q^{\times n} \left[ X^n \in \set{T}_\epsilon^{(n)}(P)\right]\bigr) = - D(P\|Q).
\end{equation}
\end{proposition}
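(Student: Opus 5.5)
The plan is the method of types: sandwich $Q^{\times n}\bigl[\set{T}_\epsilon^{(n)}(P)\bigr]$ between polynomial factors times $2^{-n D(\tau\|Q)}$, optimized over types $\tau$ in the typical set. Then let $n\to\infty$ and afterwards $\epsilon\downarrow 0$, using continuity of $D(\cdot\|Q)$. Since the inner limit need not exist for fixed $\epsilon$, I will bound the $\liminf$ and $\limsup$ in $n$ separately. I will show both converge to $-D(P\|Q)$ as $\epsilon\downarrow 0$.

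\textbf{Case $P\not\ll Q$.} Let $B_\epsilon=\{\tau\in\PX : |\tau(\xi)-P(\xi)|\le \epsilon P(\xi)\ \forall \xi\}$. Any type in $B_\epsilon$ has $\supp(\tau)=\supp(P)$. If some $\xi$ has $P(\xi)>0=Q(\xi)$, then every sequence in $\set{T}_\epsilon^{(n)}(P)$ contains $\xi$. Hence its $Q^{\times n}$-probability is $0$, the logarithm is $-\infty$ for every $n$ and $\epsilon$, and this matches $D(P\|Q)=\infty$. From now on assume $P\ll Q$.

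\textbf{Decomposition into type classes.} Write
\[
Q^{\times n}\bigl[\set{T}_\epsilon^{(n)}(P)\bigr]=\sum_{\tau\in B_\epsilon\cap\set{P}_n} |\set{T}(\tau)|\, 2^{-n(H(\tau)+D(\tau\|Q))},
\]
where $\set{P}_n$ is the set of $n$-types, $\set{T}(\tau)$ is the type class, and each summand uses Proposition~\ref{prop:prob} with $Q$ in place of $P$. I will use the standard bounds $(n+1)^{-|\set{X}|}2^{nH(\tau)}\le|\set{T}(\tau)|\le 2^{nH(\tau)}$ and $|\set{P}_n|\le (n+1)^{|\set{X}|}$. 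These give
\[
(n+1)^{-|\set{X}|}\max_{\tau\in B_\epsilon\cap\set{P}_n}2^{-nD(\tau\|Q)}\;\le\; Q^{\times n}\bigl[\set{T}_\epsilon^{(n)}(P)\bigr]\;\le\;(n+1)^{|\set{X}|}\max_{\tau\in B_\epsilon\cap\set{P}_n}2^{-nD(\tau\|Q)}.
\]

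\textbf{Upper bound.} On $B_\epsilon$ every $\tau$ satisfies $\supp(\tau)=\supp(P)\subseteq\supp(Q)$. So $D(\tau\|Q)$ is finite and continuous on the compact set $B_\epsilon$. This yields
\[
\limsup_{n\to\infty}\frac1n\log Q^{\times n}\bigl[\set{T}_\epsilon^{(n)}(P)\bigr]\le -\min_{\tau\in B_\epsilon}D(\tau\|Q).
\]

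\textbf{Lower bound.} This is the only step needing some care. I need a type in $B_\epsilon\cap\set{P}_n$ for all large $n$, with $D(\tau_n\|Q)$ close to $D(P\|Q)$. I will construct $\tau_n$ by rounding $nP(\xi)$ to integers and adjusting one symbol in $\supp(P)$ so the counts sum to $n$. Then $|\tau_n(\xi)-P(\xi)|\le |\set{X}|/n$, and $\tau_n(\xi)=0$ whenever $P(\xi)=0$. For $n$ large this lies within $\epsilon P(\xi)$ for every $\xi\in\supp(P)$, because $\min_{\xi\in\supp P}P(\xi)>0$. Hence $\tau_n\in B_\epsilon$ and $\tau_n\to P$. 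This gives
\[
\liminf_{n\to\infty}\frac1n\log Q^{\times n}\bigl[\set{T}_\epsilon^{(n)}(P)\bigr]\ge -\lim_{n}D(\tau_n\|Q)=-D(P\|Q).
\]

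\textbf{Letting $\epsilon\downarrow 0$.} The sets $B_\epsilon$ shrink to $\{P\}$, and $D(\cdot\|Q)$ is continuous on $B_{\epsilon_0}$ for a fixed $\epsilon_0$. Therefore $\min_{B_\epsilon}D(\tau\|Q)\to D(P\|Q)$. Combining the upper and lower bounds, the $\liminf$ and $\limsup$ in $n$ both lie in $\bigl[-D(P\|Q),\,-\min_{B_\epsilon}D(\tau\|Q)\bigr]$, and both endpoints converge to $-D(P\|Q)$. This proves the claimed limit.
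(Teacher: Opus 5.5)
Your proof is correct. The paper gives no proof of this proposition; it recalls it as a well-known fact with a pointer to Csisz\'ar--K\"orner. Your argument is exactly the standard method-of-types proof: you sandwich the typical-set probability between $(n+1)^{\mp|\set{X}|}\max_{\tau} 2^{-nD(\tau\|Q)}$, use an explicit rounded type $\tau_n\to P$ for the lower bound, and then use continuity of $D(\cdot\|Q)$ as $\epsilon\downarrow 0$.

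One detail to tighten: in the case $P\not\ll Q$ you use that every $\tau\in B_\epsilon$ has the same support as $P$, and this holds only for $\epsilon<1$. That restriction is harmless, since only small $\epsilon$ matter in the outer limit.
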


We shall use Sanov's theorem in the following form where equality holds; for the more general form of Sanov's theorem, see~\cite{dembozeitouni98}.
\begin{proposition}[Sanov] \label{prop:Sanov}
Let $Q\in\set{P}(\set{X})$ have full support, i.e., $\supp(Q)=\set{X}$, and let $\set{S}\subseteq \set{P}(\set{X})$ be contained in the closure of its interior with respect to the subspace topology on $\set{P}(\set{X})$. 
Then
\begin{equation}
\lim_{n\to\infty} \frac{1}{n} \log \left( Q^{\times n} \big[ \tau[X^n] \in \set{S} \big] \right) = - \inf_{P\in\set{S}} D(P\| Q).
\end{equation}
\end{proposition}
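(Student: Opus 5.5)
We prove the two inequalities separately using the method of types. Throughout, write $D^\ast \triangleq \inf_{P\in\set{S}} D(P\|Q)$. Because $Q$ has full support, $D(\cdot\|Q)$ is finite and continuous on all of $\PX$; this is the only place the full-support hypothesis is used, and it is essential for the lower bound. Let $\set{P}_n$ denote the set of types of length-$n$ sequences, and $\set{T}(\tau)$ the type class of $\tau$. We use two standard facts:
\begin{equation}
|\set{P}_n|\le (n+1)^{|\set{X}|}
\end{equation}
and
\begin{equation}
(n+1)^{-|\set{X}|}2^{-nD(\tau\|Q)} \le Q^{\times n}\bigl[\set{T}(\tau)\bigr]\le 2^{-nD(\tau\|Q)} .
\end{equation}
The second follows from Proposition~\ref{prop:prob} together with $(n+1)^{-|\set{X}|}2^{nH(\tau)}\le|\set{T}(\tau)|\le 2^{nH(\tau)}$.

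\emph{Upper bound.} We have
\begin{equation}
Q^{\times n}\bigl[\tau[X^n]\in\set{S}\bigr]=\sum_{\tau\in\set{P}_n\cap\set{S}}Q^{\times n}[\set{T}(\tau)] .
\end{equation}
Each summand is at most $2^{-nD(\tau\|Q)}\le 2^{-nD^\ast}$, and there are at most $(n+1)^{|\set{X}|}$ summands. Hence
\begin{equation}
\frac1n\log Q^{\times n}\bigl[\tau[X^n]\in\set{S}\bigr]\le \frac{|\set{X}|\log(n+1)}{n}-D^\ast ,
\end{equation}
and therefore $\limsup_{n\to\infty}\frac1n\log Q^{\times n}[\tau[X^n]\in\set{S}]\le -D^\ast$. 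This direction needs no assumption on $\set{S}$ or on $Q$.

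\emph{Lower bound.} Fix $\delta>0$ and choose $P\in\set{S}$ with $D(P\|Q)\le D^\ast+\delta$. Since $\set{S}$ is contained in the closure of its interior, there are interior points of $\set{S}$ arbitrarily close to $P$. By continuity of $D(\cdot\|Q)$ we can pick such a point $P^\ast$ with $D(P^\ast\|Q)\le D^\ast+2\delta$. Interiority means there is $r>0$ such that every $\tilde P\in\PX$ with $\|\tilde P-P^\ast\|_1<r$ lies in $\set{S}$. By continuity again, we may shrink $r$ so that every such $\tilde P$ also satisfies $D(\tilde P\|Q)\le D^\ast+3\delta$.

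Next we need a type in this neighbourhood. Types are $1/n$-dense in the simplex: rounding the coordinates of $P^\ast$ to multiples of $1/n$ while preserving the total sum gives $\tau_n\in\set{P}_n$ with $\|\tau_n-P^\ast\|_1\le |\set{X}|/n$. So for all large $n$ we have $\tau_n\in\set{S}$ and $D(\tau_n\|Q)\le D^\ast+3\delta$. Consequently
\begin{equation}
Q^{\times n}\bigl[\tau[X^n]\in\set{S}\bigr]\ge Q^{\times n}[\set{T}(\tau_n)]\ge (n+1)^{-|\set{X}|}2^{-n(D^\ast+3\delta)} ,
\end{equation}
which gives $\liminf_{n\to\infty}\frac1n\log Q^{\times n}[\tau[X^n]\in\set{S}]\ge -D^\ast-3\delta$. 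Letting $\delta\downarrow0$ and combining with the upper bound proves the stated limit.

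The main obstacle is this lower bound. An arbitrary near-optimal $P\in\set{S}$ need not be approximable by types that themselves lie in $\set{S}$; for example, $\set{S}$ could be a lower-dimensional set containing no types. The hypothesis that $\set{S}$ lies in the closure of its interior is what lets us move from $P$ to an interior point $P^\ast$ and use a whole relatively open neighbourhood of it. Finiteness and continuity of $D(\cdot\|Q)$, guaranteed by $\supp(Q)=\set{X}$, are what keep the exponent from jumping when we make the two approximations $P\to P^\ast\to\tau_n$.
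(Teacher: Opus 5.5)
Your proof is correct and is essentially the standard method-of-types argument behind the result. The paper does not prove this proposition itself; it cites the finite-alphabet Sanov theorem of Dembo and Zeitouni. That theorem is proved with exactly your bounds ($|\set{P}_n|\le(n+1)^{|\set{X}|}$ and the two-sided estimate on $Q^{\times n}[\set{T}(\tau)]$), yielding $-\inf_{\textnormal{int}(\set{S})} D(\cdot\|Q)$ as a lower bound and $-\inf_{\set{S}} D(\cdot\|Q)$ as an upper bound. Your step of replacing a near-optimal $P\in\set{S}$ by a nearby interior point is exactly the reduction, left implicit in the paper, that makes these two infima coincide under the closure-of-interior hypothesis; it relies on the continuity of $D(\cdot\|Q)$ that full support guarantees. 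The only thing left implicit in your write-up is the trivial case $\set{S}=\emptyset$, where both sides equal $-\infty$.
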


\section{Typical Ensembles}\label{sec:typicalcoll}
Consider an ensemble of $\lceil 2^{nR} \rceil$ sequences
\begin{equation} \label{eq:ensemble}
 \set{E} = \left\{x^n(1),\ldots,x^n(\lceil 2^{nR} \rceil)\right\}, 
\end{equation}
where every $x^n(j)\in\set{X}^n$. The goal of this section is to introduce a notion of typicality for such ensembles. 

For a given distribution $P\in\PX$, fix some small $\delb>0$ and let
\begin{equation}\label{eq:defPRepsilon}
\set{P}_{R+\delb} \triangleq \{P'\ll P \colon D(P'\| P) \le R+\delb \}.
\end{equation}
Further let $\{\set{S}_a\}$, $a\in\set{A}$, be a finite partition of $\set{P}_{R+\delb}$ satisfying, for every $a\in\set{A}$,
\begin{IEEEeqnarray}{c}
\| P_1 - P_2 \|_1 \le \delb,\quad \textnormal{for all }P_1,P_2\in\set{S}_a \label{eq:defSa} \\
\set{S}_a  \subseteq \textnormal{clo} \bigl( \textnormal{int} \left( \set{S}_a  \right)\bigr).
\end{IEEEeqnarray}
A partition with such properties can be obtained, for example, by intersecting $\set{P}_{R+\delb}$ with a cubical grid on $\mathbb{R}^{|\set{X}|}$ of an appropriately chosen unit. Alternatively, one could pick a finite number of $\epsilon/2$-balls on  the probability simplex on $\supp(P)$ that cover $\set{P}_{R+\delb}$---which must be possible because $\set{P}_{R+\delb}$ is compact---and define the partition as the Voronoi diagram associated with the centers of the balls.

Denote, for every $a\in\set{A}$,
\begin{equation}\label{eq:defda}
d_a \triangleq \inf_{P'\in\set{S}_a} D(P'\| P).
\end{equation}
Let 
\begin{equation}
P_a^{(n)} \triangleq P^{\times n} \big[ \tau[X^n] \in \set{S}_a \big] = \sum_{x^n\colon \tau[x^n]\in\set{S}_a} P^{\times n}(x^n).
\end{equation}
By Sanov's theorem (Proposition~\ref{prop:Sanov}),
\begin{equation}\label{eq:Sanovda}
\lim_{n\to\infty} \frac{1}{n} \log P_a^{(n)} = - d_a.
\end{equation}


Suppose all sequences in $\mathcal{E}$ are IID $\sim P$ and independent of each other. Denote, for every $a\in\set{A}$,
\begin{equation}
N_a \triangleq \sum_{j=1}^{\lceil 2^{nR} \rceil} \boldsymbol{1}\left\{\tau\big[X^n(j)\big] \in \set{S}_a\right\}.
\end{equation}
Since $N_a$ is a sum of IID Bernoulli random variables of parameter $P_a^{(n)}$, we have
\begin{IEEEeqnarray}{rCl}
\E{N_a} & = & P_a^{(n)} \cdot \lceil 2^{nR} \rceil\\
\Var{N_a} & = & P_a^{(n)} (1- P_a^{(n)}) \cdot \lceil 2^{nR} \rceil.
\end{IEEEeqnarray}
It then follows from \eqref{eq:Sanovda} that
\begin{equation} \label{eq:prob1}
\lim_{n\to\infty} \Pr \bigl[2^{-n(d_a+\delb)} \le N_a 2^{-nR} \le 2^{-n(d_a - \delb)}\bigr] = 1.
\end{equation}

Motivated by the above observation, we now introduce the notion of a typical ensemble:

\begin{definition}\label{def:typens}
Given an ensemble $\set{E}$ of the form \eqref{eq:ensemble}, denote
\begin{equation}
\nu_a (\set{E}) \triangleq \sum_{j=1}^{\lceil 2^{nR} \rceil} \boldsymbol{1}\left\{\tau\big[x^n(j)\big] \in \set{S}_a\right\}.
\end{equation}
We say that $\set{E}$ is $\delb$-typical with respect to distribution $P$ and partition $\{\set{S}_a\}$, $a\in\set{A}$, if, for all $a\in\set{A}$,
\begin{equation}
2^{-n(d_a+\delb)} \le \nu_a(\set{E}) 2^{-nR} \le 2^{-n(d_a - \delb)},
\end{equation}
and if no $x^n(j)\in\set{E}$ is in $\set{P}_{R+\delb}^\textnormal{c}$. The set of all typical ensembles is denoted $\set{L}_\epsilon^{(n)}(P,\{\set{S}_a\})$.
\end{definition}

\begin{proposition}\label{prop:typical}
As $n\to\infty$, the probability that a random ensemble of $\lceil 2^{nR} \rceil$ $n$-sequences generated IID $\sim P$ is $\delb$-typical with respect to $P$ and $\{\set{S}_a\}$, $a\in\set{A}$, tends to one:
\begin{equation}
\lim_{n\to\infty} P^{\times (n \lceil 2^{nR} \rceil)} \Big[ \{X^n(1),\ldots,X^n(\lceil 2^{nR}\rceil )\} \in \set{L}_\epsilon^{(n)}(P,\{\set{S}_a\}) \Big] = 1.
\end{equation}
\end{proposition}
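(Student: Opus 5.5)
The plan is a union bound over the finitely many failure events. Since $\set{A}$ is finite and fixed (it does not depend on $n$), it suffices to show two things. First, for each fixed $a\in\set{A}$, the probability that $N_a$ violates $2^{-n(d_a+\delb)} \le N_a 2^{-nR} \le 2^{-n(d_a-\delb)}$ tends to zero; this is exactly \eqref{eq:prob1}. Second, the probability that some $X^n(j)$ has type outside $\set{P}_{R+\delb}$ tends to zero. Summing $|\set{A}|+1$ vanishing probabilities then gives the claim.

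For the first event I will justify \eqref{eq:prob1} carefully, since it is stated only briefly.

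The upper inequality follows from Markov's inequality: $\Pr[N_a > 2^{n(R-d_a+\delb)}] \le \E{N_a}\, 2^{-n(R-d_a+\delb)}$. By \eqref{eq:Sanovda}, $\E{N_a} \le 2\cdot 2^{n(R-d_a+\delb/2)}$ for large $n$, so this probability is at most $2\cdot 2^{-n\delb/2}\to 0$.

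For the lower inequality I will use Chebyshev. By \eqref{eq:Sanovda}, $\E{N_a} \ge 2^{n(R-d_a-\delb/2)}$ for large $n$, so the lower threshold $2^{n(R-d_a-\delb)}$ is below $\E{N_a}/2$. Hence
\[
\Pr[N_a < 2^{n(R-d_a-\delb)}] \le \frac{\Var{N_a}}{(\E{N_a}/2)^2} \le \frac{4}{\E{N_a}}.
\]
This step is the main subtlety: the bound vanishes only if $R-d_a-\delb/2>0$. If $d_a > R$, the expectation itself decays, and the lower constraint is still met by $N_a=0$ only when $N_a 2^{-nR}\ge 2^{-n(d_a+\delb)}$, which fails at $N_a=0$. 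So the lower bound cannot hold when $d_a>R$, and \eqref{eq:prob1} as written is problematic there. I would handle this by reading the definition (consistent with how \eqref{eq:prob1} must be intended) as restricting the lower bound to cells with $d_a < R$. For cells with $d_a < R$, choose $\delb$ small enough relative to the gaps $R-d_a$; the partition can also be chosen so that no $d_a$ lies in $[R-\delb, R]$, or the statement can be interpreted accordingly. If the paper's intended reading differs, this is the point where the argument would have to be adjusted.

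For the second event, a union bound over the $\lceil 2^{nR}\rceil$ sequences gives
\[
\Pr\bigl[\exists j:\ \tau[X^n(j)]\notin \set{P}_{R+\delb}\bigr] \le \lceil 2^{nR}\rceil\, P^{\times n}\bigl[D(\tau[X^n]\|P) > R+\delb\bigr].
\]
Note that $\tau\ll P$ almost surely. By the method of types, the number of types is at most $(n+1)^{|\set{X}|}$ and each type class has $P$-probability at most $2^{-nD(\tau\|P)}$, so the right-hand side is at most
\[
2\cdot 2^{nR}(n+1)^{|\set{X}|}\, 2^{-n(R+\delb)} \to 0.
\]
Combining the two parts completes the plan.
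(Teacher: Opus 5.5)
Your route is the paper's route: a union bound over the $|\set{A}|$ cell-count events plus the event that some sequence has a type outside $\set{P}_{R+\delb}$. For the second event, your method-of-types bound $2(n+1)^{|\set{X}|}2^{-n\delb}$ is cleaner than the paper's appeal to Proposition~\ref{prop:Sanov}. That proposition is stated only for full-support $Q$. Moreover, its claimed limit $-R-\delb$ is wrong when $R+\delb\ge\max_{P'\ll P}D(P'\|P)$, since the probability is then $0$. Only the upper bound is needed anyway.

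The subtlety you flag is real, and it is a gap in the paper, not in your argument. The paper simply cites \eqref{eq:prob1} for every $a$.

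\emph{Why it fails.} For a cell with $d_a>R$, the mean $\mu_a=P_a^{(n)}\lceil 2^{nR}\rceil$ decays exponentially, so $\Pr[N_a\ge 1]\le\mu_a\to 0$. But the lower constraint $\nu_a\ge 2^{n(R-d_a-\delb)}>0$ forces $\nu_a\ge 1$.

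\emph{Such cells exist.} The shell $\{P'\colon R<D(P'\|P)\le R+\delb\}$ is part of $\set{P}_{R+\delb}$ and, when nonempty, contains a relatively open set. Hence sufficiently fine partitions have cells inside it, and for those partitions the proposition as stated is false. At $d_a=R$ the outcome hinges on subexponential factors, so excluding that case too is the safe choice.

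\emph{The repair is sound.} Requiring the lower bound only for cells with $d_a<R$ is the right fix, and nothing downstream breaks. The step to \eqref{eq:50Rlambda} uses only the upper bound in \eqref{eq:boundnua}. In \eqref{eq:58H1} one keeps only the cells with $d_a<R$. For $R>0$, let $P^\star$ attain the minimum in \eqref{eq:theta2} and set $P_t=(1-t)P^\star+tP$. By convexity $D(P_t\|P)\le(1-t)R<R$. Since all of these distributions are $\ll Q_1$, $D(P_t\|Q_1)\to D(P^\star\|Q_1)$ as $t\downarrow 0$. Therefore $\min_{a\colon d_a<R}r_a\le\min_{P'\colon D(P'\|P)\le R}D(P'\|Q_1)$, and the converse goes through.

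One tidy-up: drop the side conditions you float, namely taking $\delb$ small relative to the gaps $R-d_a$, or arranging that no $d_a$ lies in $[R-\delb,R]$. The first is circular, because the partition, and hence the $d_a$, depends on $\delb$. The second generally cannot be achieved with cells of diameter at most $\delb$.

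Neither condition is needed; they arise only because you fixed the slack at $\delb/2$. For any cell with $d_a<R$, \eqref{eq:Sanovda} gives $\mu_a\ge 2^{n(R-d_a-\eta)}$ for large $n$, for any $0<\eta<\min\{\delb,R-d_a\}$. Then $\mu_a\to\infty$, and the threshold $2^{n(R-d_a-\delb)}$ eventually lies below $\mu_a/2$. So your Chebyshev bound $4/\mu_a$ vanishes.
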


\begin{IEEEproof}
By Sanov's theorem (Proposition~\ref{prop:Sanov}), 
\begin{equation}
\lim_{n\to\infty} \frac{1}{n} \log \left( P^{\times n} \big[ \tau[X^n] \in \set{P}_{R+\delb}^\textnormal{c} \big] \right) = - R-\delb.
\end{equation}
Since there are only $\lceil 2^{nR} \rceil$ sequences, the probability that there exists at least one sequence in $\set{P}_{R+\delb}^\textnormal{c}$ tends to zero like $2^{-n\delb}$ as $n$ tends to infinity. The claim now follows by recalling \eqref{eq:prob1}, noting that $|\set{A}|$ is finite (and does not grow with $n$), 
and the union bound.
\end{IEEEproof}

\section{Proof of Direct Part}\label{sec:direct}
We describe a decision rule and analyze its error probabilities under both hypotheses. In the following, $\tau_j$ denotes the type of $x^n(j)$, $j\in\{1,\ldots,\lceil 2^{nR} \rceil\}$. 

\paragraph{Decision rule.} Fix some $\epsilon>0$. If there exists $\ell \in \{1,\ldots,\lceil 2^{nR} \rceil\}$ such that 
\begin{equation}\label{eq:elltyp}
x^n(\ell) \in \set{T}_\epsilon^{(n)}(Q_0),
\end{equation}
and if 
\begin{equation}
D\left( \tau_j \middle\| P \right) \le R + \delb\quad \textnormal{for all }j\neq\ell, \label{eq:violate}
\end{equation}
then declare $\mH_0$. In all other cases declare $\mH_1$.

\bigskip

In the analyses below, without loss of generality, we shall assume that 
\begin{equation}
K=1.
\end{equation}

\paragraph{Error under $\mH_0$.} Note that $X^n(1)$ is IID $\sim Q_0$. By Proposition~\ref{prop:LLN}, the probability that $X^n(1)\in \set{T}_\epsilon^{(n)}(Q_0)$ tends to one as $n\to\infty$. Furthermore, Proposition~\ref{prop:typical} implies that\footnote{For the direct part we only need a (much) weaker version of Proposition~\ref{prop:typical}: we only need the fact that the probability that there exists a sequence whose type is in $\set{P}_{R+\epsilon}^\textnormal{c}$ tends to zero as $n\to\infty$; i.e., the partition $\{S_a\}$ plays no role here. } the probability that there exists $j\neq 1$ that violates \eqref{eq:violate} tends to zero as $n\to\infty$. We hence conclude that the overall decision error probability under $\mH_0$ tends to zero. A fortiori, for any $\bar{\alpha}>0$, for sufficiently large $n$, this error probability is less than $\bar{\alpha}$.

\paragraph{Error under $\mH_1$.} There are two cases where an error will occur.

\textbf{Case 1:} $X^n(1)\in  \set{T}_\epsilon^{(n)}(Q_0)$ and \eqref{eq:violate} is satisfied. The latter happens with high probability (same as under $\mH_0$). As for the former, since $X^n(1)$ is IID $\sim Q_1$, by Proposition~\ref{prop:Stein}, when we make $\epsilon$ approach zero, the exponent of this probability approaches $D(Q_0\|Q_1)=\theta_1$. 

\textbf{Case 2:} there exists $\ell\neq 1$ such that 
\begin{equation}\label{eq:31}
X^n(\ell) \in \set{T}_\epsilon^{(n)}(Q_0),
\end{equation}
while 
\begin{equation}\label{eq:32}
D\big( \tau_1\big\| P \big) \le R + \delb.
\end{equation} 
Since $X^n(\ell)$ is IID $\sim P$, the probability that \eqref{eq:31} happens for a specific $\ell$ has exponent $D(Q_0\|P)$ (again when we make $\epsilon$ approach zero).\footnote{If $Q_0\not\ll P$, then \eqref{eq:31} can never happen, therefore the probability of Case~2 is zero. This is consistent with our claim: $\theta_2=\infty$ because $D(Q_0\|P)=\infty$.} Since there are $\lceil 2^{nR} \rceil-1$ mutually independent candidate sequences, the probability that there exists at least one such $\ell$ has exponent
\begin{equation}\label{eq:exp1}
\big( D(Q_0\|P) - R \big)^+.
\end{equation}
On the other hand, since $X^n(1)$ is IID $\sim Q_1$, by Sanov's theorem (Proposition~\ref{prop:Sanov}), the exponent of the probability for \eqref{eq:32} to happen is given by
\begin{equation}\label{eq:exp2}
\min_{P'\colon D(P'\|P) \le R+\delb} D(P'\| Q_1),
\end{equation}
where $\inf$ becomes $\min$ because the constraint set is closed and compact. 
Because $X^n(1)$ is independent of all other sequences in the ensemble, the events \eqref{eq:31} and \eqref{eq:32} are independent. Consequently, the overall exponent of the probability of Case~2 is the sum of \eqref{eq:exp1} and \eqref{eq:exp2}. When we make $\epsilon$ approach zero, this sum approaches $\theta_2$ as defined in \eqref{eq:theta2}. Here we use the fact
\begin{equation}
\lim_{\delb\downarrow 0}\, \min_{P'\colon D(P'\|P) \le R+\delb} D(P'\| Q_1) = \min_{P'\colon D(P'\|P) \le R} D(P'\| Q_1)
\end{equation}
because relative entropy is lower semicontinuous.

The overall error exponent under $\mH_1$ is dominated by the smaller exponent (i.e., the slower decay) between Case~1 and Case~2. It is therefore given by $\min \{\theta_1,\theta_2\}$. This concludes the proof of the direct part of Theorem~\ref{thm:main}.

\section{Proof of Converse Part}\label{sec:converse}

Even if the value of $K$ is known, the error exponent we seek cannot exceed the Chernoff-Stein exponent in testing between $Q_0$ and $Q_1$, which is $\theta_1$. We hence only need to show that (without the knowledge of $K$) the error exponent also cannot exceed $\theta_2$. We henceforth assume
\begin{equation}
Q_0\ll P
\end{equation}
because otherwise $\theta_2=\infty$ and the assertion holds trivially.

First consider the case where
\begin{equation}\label{eq:DgreaterR}
D(Q_0\| P) >R.
\end{equation}
Fix a small positive $\epsilon$, let $\set{P}_{R+\epsilon}$ be given by \eqref{eq:defPRepsilon}, and also fix a partition $\{\set{S}_a\}$, $a\in\set{A}$, of $\set{P}_{R+\epsilon}$ as described in Section~\ref{sec:typicalcoll}. 
For $i\in\{1,\ldots,\lceil 2^{nR} \rceil\}$, let $\set{B}_i$ denote the set of all $\{x^n(1),\ldots,x^n(\lceil 2^{nR} \rceil)\}$ satisfying\footnote{We defined $\set{L}_\epsilon^{(n)}(P,\{\set{S}_a\})$ for ensembles of $\lceil 2^{nR} \rceil$ sequences back in Section~\ref{sec:typicalcoll}, but here apply it to those of $(\lceil 2^{nR} \rceil-1)$ sequences. This difference is negligible for large $n$.}
\begin{IEEEeqnarray}{rCl}
x^n(i) & \in & \set{T}_\epsilon^{(n)}(Q_0)\\
\{x^n(j)\}_{j\neq i} & \in & \set{L}_\epsilon^{(n)}(P,\{\set{S}_a\}).
\end{IEEEeqnarray}
Under hypothesis $\mH_0$ and conditional on $K=i$, the probability that $\{X^n(1),\ldots,X^n(\lceil 2^{nR} \rceil)\}\in \set{B}_i$ tends to one as $n\to\infty$. It then follows that
\begin{equation}\label{eq:PUB}
\lim_{n\to\infty} \Pr \left[ \left\{X^n(1),\ldots,X^n(\lceil 2^{nR} \rceil)\right\}\in \bigcup_{i=1}^{\lceil 2^{nR} \rceil} \set{B}_i \,\middle|\, \mH_0\textnormal{ true}\right] = 1.
\end{equation}

We take any ensemble in the above union and compare its probability under both hypotheses. Without loss of generality, we can pick one in $\set{B}_1$: 
\begin{equation}\label{eq:inB1}
\{x^n(1),\ldots,x^n(\lceil 2^{nR} \rceil)\}\in\set{B}_1.
\end{equation} 
For every $j\in\{1,\ldots,\lceil 2^{nR} \rceil\}$, let $\tau_j$ denote the type of $x^n(j)$.
Denote
\begin{IEEEeqnarray}{rCl}
\omega & \triangleq & P^{\times \left(n(\lceil 2^{nR} \rceil-1)\right)} \Big[\left\{x^n(2),\ldots,x^n(\lceil 2^{nR} \rceil)\right\}\Big] \\
& = & \prod_{j=2}^{\lceil 2^{nR} \rceil} 2^{-n \left( H(\tau_j) + D(\tau_j\| P)\right)},
\end{IEEEeqnarray}
where the second equality follows by Proposition~\ref{prop:prob}.
Also denote, for every $a\in\set{A}$,
\begin{equation}
\nu_a \triangleq \sum_{j=2}^{\lceil 2^{nR} \rceil} \mathbf{1}\left\{ \tau_j \in\set{S}_a\right\}.
\end{equation}
Recalling Definition~\ref{def:typens}, \eqref{eq:inB1} implies
\begin{equation}\label{eq:boundnua}
2^{n(R - d_a - \epsilon)} \le \nu_a \le 2^{n(R-d_a+\epsilon)}.
\end{equation}

Under $\mH_0$ and conditional on $K=1$, the probability of $\{x^n(1),\ldots,x^n(\lceil 2^{nR} \rceil)\}$ is $Q_0^{\times n}(x^n(1))$ times $\omega$, i.e., it is
\begin{equation}
2^{-n(H(\tau_1) + D(\tau_1\|Q_0))}\cdot \omega. 
\end{equation}
Under $\mH_0$ and conditional on $K=j\neq1$, its probability is 
\begin{equation}
2^{-n(H(\tau_1)+D(\tau_1\|P))} \cdot \omega \cdot \frac{2^{-n(H(\tau_j) + D(\tau_j\|Q_0))}}{2^{-n(H(\tau_j)+D(\tau_j\|P))}},
\end{equation}
i.e., it is $P^{\times n}(x^n(1))$ times $\omega$ but with the term $P^{\times n}(x^n(j))$ in $\omega$ replaced by $Q_0^{\times n}(x^n(j))$. Since $K$ is uniform, we have
\begin{IEEEeqnarray}{rCl}
\IEEEeqnarraymulticol{3}{l}{
\Pr \left[ \left\{x^n(1),\ldots,x^n(\lceil 2^{nR} \rceil)\right\} \,\middle|\, \mH_0\textnormal{ true}\right] 
}\nonumber\\*
\qquad & = & 2^{-nR} \left( 2^{-n \left(H(\tau_1) + D(\tau_1\| Q_0) \right)} \cdot \omega +  \sum_{j=2}^{\lceil 2^{nR} \rceil} 2^{-n \left(H(\tau_1) + D(\tau_1\| P) \right)} \cdot \omega \cdot \frac{2^{-n(H(\tau_j) + D(\tau_j\|Q_0))}}{2^{-n(H(\tau_j)+D(\tau_j\|P))}} \right)\IEEEeqnarraynumspace \\
& = & 2^{-nR -nH(\tau_1)} \cdot \omega \cdot\left( 2^{-nD(\tau_1\|Q_0)} + 2^{-n D(\tau_1\| P)} \sum_{j=2}^{\lceil 2^{nR} \rceil} 2^{-n\left(D(\tau_j \| Q_0) - D(\tau_j\| P)\right)} \right)\\
& \le& 2^{-nR-n H(\tau_1)}\cdot \omega \cdot \left( 1 + 2^{-n D(\tau_1\| P)} \sum_{j=2}^{\lceil 2^{nR} \rceil} 2^{n D(\tau_j\| P)} \right)\\
& = & 2^{-nR-n H(\tau_1)}\cdot \omega \cdot \left( 1 + 2^{-n D(\tau_1\| P)} \sum_{a\in\set{A}} \sum_{j=2}^{\lceil 2^{nR} \rceil} \mathbf{1}\{\tau_j \in\set{S}_a\} 2^{n D(\tau_j\| P)}  \right). \label{eq:54}
\end{IEEEeqnarray}
Here we need to use the fact that $D(P'\|P)$ is uniformly continuous in $P'\ll P$ with respect to total variation distance. Namely, there exists some $\lambda(\epsilon)$---which may depend on $P$---that tends to zero as $\epsilon$ tends to zero such that, if $P_1,P_2\ll P$ and $\| P_1-P_2\|_1 \le \epsilon$, then 
\begin{IEEEeqnarray}{rCl}
|D(P_1\|P)-D(P_2\|P)| 
& \le & \lambda(\epsilon). \label{eq:lameps}
\end{IEEEeqnarray}
This is because convergence in total variation implies component-wise convergence; the mapping $\xi\mapsto \xi\log(\xi/\alpha)$ is continuous whenenver $\alpha>0$; and $P$ has finite support hence the minimum of $P(x)>0$ exists. 
With \eqref{eq:lameps}, we can continue \eqref{eq:54} as
\begin{IEEEeqnarray}{rCl}
\IEEEeqnarraymulticol{3}{l}{
\Pr \left[ \left\{x^n(1),\ldots,x^n(\lceil 2^{nR} \rceil)\right\} \,\middle|\, \mH_0\textnormal{ true}\right] 
}\nonumber\\*
\qquad\qquad& \le & 2^{-nR-n H(\tau_1)}\cdot \omega \cdot \left( 1 + 2^{-n D(\tau_1\| P)} \sum_{a\in\set{A}} \nu_a 2^{n \left(d_a + \lambda( \epsilon)\right)} \right) \label{eq:51da} \\
& \le & 2^{-nR-n H(\tau_1)}\cdot \omega \cdot \bigg( 1 + 2^{-n D(\tau_1\| P)} \left| \set{A} \right| \cdot 2^{n\left(R+\lambda(\epsilon)+\epsilon\right)} \bigg), \label{eq:50Rlambda}
\end{IEEEeqnarray}
where \eqref{eq:51da} follows from the definition of $d_a$ in \eqref{eq:defda} and \eqref{eq:lameps}; and \eqref{eq:52Rlambda} follows by \eqref{eq:boundnua}.
Recall that $x^n(1)\in\set{T}_\epsilon^{(n)}$, which implies that $\|\tau_1-Q_0\|_1\le \epsilon$. Consequently, again by \eqref{eq:lameps},
\begin{equation}\label{eq:DtauQP}
\left| D(\tau_1\| P) - D(Q_0\| P) \right| \le \lambda (\epsilon),
\end{equation}
so \eqref{eq:50Rlambda} can be relaxted to
\begin{IEEEeqnarray}{rCl}
\IEEEeqnarraymulticol{3}{l}{
\Pr \left[ \left\{x^n(1),\ldots,x^n(\lceil 2^{nR} \rceil)\right\} \,\middle|\, \mH_0\textnormal{ true}\right] 
}\nonumber\\*
\qquad\qquad& \le & 2^{-nR-n H(\tau_1)}\cdot \omega \cdot \left(1+ |\set{A}| \cdot 2^{n(R-D(Q_0\|P) + 2\lambda(\epsilon)+\epsilon)}\right).
\label{eq:52Rlambda}
\end{IEEEeqnarray}
Since we assume \eqref{eq:DgreaterR} to hold, there must exist a sufficiently small $\epsilon$ such that the second summand inside the parenthesis in \eqref{eq:52Rlambda} should decay to zero exponentially with $n$. Therefore, for any $\delta>0$, there exist $\epsilon>0$ and partition $\{\set{S}_a\}$ such that, for sufficiently large $n$, 
\begin{equation} \label{eq:PrH0}
\Pr \left[ \left\{x^n(1),\ldots,x^n(\lceil 2^{nR} \rceil)\right\} \,\middle|\, \mH_0\textnormal{ true}\right]  \le (1+\delta) 2^{-nR-n H(\tau_1)}\cdot \omega.
\end{equation}

We now analyze the probability of the same ensemble \eqref{eq:inB1} under $\mH_1$:
\begin{IEEEeqnarray}{rCl}
\IEEEeqnarraymulticol{3}{l}{
\Pr \left[ \left\{x^n(1),\ldots,x^n(\lceil 2^{nR} \rceil)\right\} \,\middle|\, \mH_1\textnormal{ true}\right] 
}\nonumber\\*
\qquad & = &2^{-nR}\left( 2^{-n(H(\tau_1)+ D(\tau_1\|Q_1))} \cdot \omega + \sum_{j=2}^{\lceil 2^{nR} \rceil} 2^{-n(H(\tau_1) + D(\tau_1\|P))} \cdot \omega \cdot \frac{2^{-n(H(\tau_j) + D(\tau_j\|Q_1))}}{2^{-n(H(\tau_j)+D(\tau_j\|P))}}\right) \IEEEeqnarraynumspace \\
& \ge & 2^{-nR-nH(\tau_1)}\cdot\omega \cdot 2^{-nD(\tau_1\| P)} \sum_{j=2}^{\lceil 2^{nR} \rceil} 2^{n(D(\tau_j\| P ) - D(\tau_j\| Q_1))} \\
& = & 2^{-nR-nH(\tau_1)}\cdot\omega \cdot 2^{-nD(\tau_1\| P)} \sum_{a\in\set{A}} \sum_{j=2}^{\lceil 2^{nR} \rceil} \mathbf{1}\{\tau_j\in\set{S}_a\} 2^{n(D(\tau_j\| P ) - D(\tau_j\| Q_1))}. \label{eq:57} \IEEEeqnarraynumspace
\end{IEEEeqnarray}
Now we again use continuity of relative entropy: there exists some $\hat{\lambda}(\epsilon)$---which may depend on $Q_1$---that tends to zero as $\epsilon$ tends to zero such that, for $P_1,P_2\ll Q_1$ (which is true for all elements of $\set{P}_{R+\epsilon}$ because we assume $P\ll Q_1$) satisfying $\|P_1-P_2\|_1 \le \epsilon$, 
\begin{equation}
| D(P_1\| Q_1) - D(P_2\|Q_1)| \le \hat{\lambda}(\epsilon).
\end{equation}
Now define
\begin{equation}
r_a \triangleq \inf_{P'\in\set{S}_a} D(P'\|Q_1),
\end{equation}
then, for all $\tau\in\set{S}_a$,
\begin{equation}
D(\tau\|Q_1) \le r_a + \hat{\lambda} (\epsilon).
\end{equation}
Further recalling \eqref{eq:defda}, we can continue \eqref{eq:57} as
\begin{IEEEeqnarray}{rCl}
\IEEEeqnarraymulticol{3}{l}{
\Pr \left[ \left\{x^n(1),\ldots,x^n(\lceil 2^{nR} \rceil)\right\} \,\middle|\, \mH_1\textnormal{ true}\right] 
}\nonumber\\*
\qquad\qquad & \ge & 2^{-nR-nH(\tau_1)}\cdot\omega \cdot 2^{-nD(\tau_1\| P)} \sum_{a\in\set{A}} \nu_a 2^{n(d_a - r_a - \hat{\lambda} (\epsilon))}  \label{eq:defra}\\
& \ge & 2^{-nR-nH(\tau_1)}\cdot\omega \cdot 2^{-nD(\tau_1\| P)} \sum_{a\in\set{A}} 2^{n(R-r_a-\epsilon - \hat{\lambda} (\epsilon))} \label{eq:58H1}\\
& \ge & 2^{-nR-nH(\tau_1)}\cdot\omega \cdot 2^{-n\bigl( D(\tau_1\| P) +  r^* -R +  \epsilon +\hat{\lambda}(\epsilon) \bigr)}.
\label{eq:defrstar} 
\end{IEEEeqnarray}
Here, \eqref{eq:58H1} follows by \eqref{eq:boundnua};
and in \eqref{eq:defrstar} we discarded all except the largest summand and introduced 
\begin{equation}
r^* \triangleq \min_{P'\in\set{P}_{R+\delb}} D(P'\|Q_1) = \min_{a\in\set{A}} r_a.
\end{equation}
Using \eqref{eq:DtauQP} again, we can further relax \eqref{eq:58H1} to
\begin{equation}
\Pr \left[ \left\{x^n(1),\ldots,x^n(\lceil 2^{nR} \rceil)\right\} \,\middle|\, \mH_1\textnormal{ true}\right]  \ge 2^{-nR-nH(\tau_1)}\cdot\omega \cdot 2^{-n\bigl( D(Q_0\| P) +  r^* -R +  \epsilon +\hat{\lambda}(\epsilon)+\lambda(\epsilon)\bigr)}. \label{eq:PrH1}
\end{equation}

Combining the two bounds \eqref{eq:PrH0} and \eqref{eq:PrH1} we now have, for every ensemble \eqref{eq:inB1},
the ratio between the probabilities under both hypotheses satisfies
\begin{equation}\label{eq:Pratio}
\frac{\Pr \left[ \big\{x^n(1),\ldots,x^n(\lceil 2^{nR} \rceil)\big\} \,\middle|\, \mH_1\textnormal{ true}\right]}{\Pr \left[ \big\{x^n(1),\ldots,x^n(\lceil 2^{nR} \rceil)\big\} \,\middle|\, \mH_0 \textnormal{ true}\right]} \ge (1+\delta)^{-1} \cdot 2^{-n\bigl( D(Q_0\| P) +  r^* -R + \epsilon +\hat{\lambda}(\epsilon) +\lambda(\epsilon)\bigr)}.
\end{equation}
Further note that this bound does not depend on the assumption that $i=1$. Namely, \eqref{eq:Pratio} holds for all
\begin{equation}
\{x^n(1),\ldots,x^n(\lceil 2^{nR} \rceil)\}\in\bigcup_{i=1}^{\lceil 2^{nR} \rceil} \set{B}_i.
\end{equation}

Suppose a sequence of decision rules indexed by $n$, whose decision regions for $\mH_0$ are $\{\set{D}_n\}$, has error probabilities bounded away from $1$, i.e., for sufficiently large $n$,
\begin{equation}
\Pr \left[ \left\{X^n(1),\ldots,X^n(\lceil 2^{nR} \rceil)\right\}\in \set{D}_n \,\middle|\, \mH_0\textnormal{ true}\right] \ge 1-\bar{\alpha},
\end{equation}
where $\bar{\alpha}\in(0,1)$. It then follows by \eqref{eq:PUB} that
\begin{equation} \label{eq:H067}
\liminf_{n\to\infty} \Pr \left[ \left\{X^n(1),\ldots,X^n(\lceil 2^{nR} \rceil)\right\}\in \set{D}_n \cap \left( \bigcup_{i=1}^{\lceil 2^{nR} \rceil} \set{B}_i \right) \,\middle|\, \mH_0\textnormal{ true}\right] \ge 1-\bar{\alpha}.
\end{equation}
The error probability of such a decision rule under $\mH_1$ is bounded as
\begin{IEEEeqnarray}{rCl}
\beta_n & = & \Pr\left[ \left\{X^n(1),\ldots,X^n(\lceil 2^{nR} \rceil)\right\} \in \set{D}_n \,\middle|\, \mH_1\textnormal{ true} \right]\\
& \ge & \Pr \left[ \left\{X^n(1),\ldots,X^n(\lceil 2^{nR} \rceil)\right\}\in \set{D}_n \cap \left( \bigcup_{i=1}^{\lceil 2^{nR} \rceil} \set{B}_i \right) \,\middle|\, \mH_1\textnormal{ true}\right]. \label{eq:H169}
\end{IEEEeqnarray}
We can now combine \eqref{eq:Pratio}, \eqref{eq:H067}, and \eqref{eq:H169} to obtain the following bound:
\begin{IEEEeqnarray}{rCl}
\liminf_{n\to\infty} \frac{1}{n}\log \beta_n & \ge & \lim_{n\to\infty} \frac{1}{n}\log\left(\frac{1-\bar{\alpha}}{1+\delta}\right) - \bigl( D(Q_0\| P) +  r^* -R +  \epsilon +\hat{\lambda}(\epsilon)+\lambda(\epsilon)\bigr)\\
& = & - \bigl( D(Q_0\| P) +  r^* -R +  \epsilon +\hat{\lambda}(\epsilon)+\lambda(\epsilon)\bigr). \label{eq:beta72}
\end{IEEEeqnarray}
When we let $\epsilon\downarrow 0$, $r^*$ tends to 
\begin{equation}
\lim_{\epsilon\downarrow 0} \min_{P\colon D(P'\| P) \le R+\epsilon } D(P'\|Q_1) = \min_{P'\colon D(P'\| P)\le R} D(P'\|Q_1).
\end{equation}
Consequently, the exponent of the right-hand side of \eqref{eq:beta72} tends to $-\theta_2$. We have shown that the exponential decay in $\beta_n$ cannot be faster than $\theta_2$ and hence completed the converse proof for the case where \eqref{eq:DgreaterR} holds.

It remains to prove the converse for the case where
\begin{equation}\label{eq:DleR}
D(Q_0\| P) \le R.
\end{equation}
In this case, \eqref{eq:50Rlambda} remains valid, but now it is dominated by the second summand: for any $\delta>0$, for sufficiently large $n$,
\begin{IEEEeqnarray}{rCl}
\Pr \left[ \left\{x^n(1),\ldots,x^n(\lceil 2^{nR} \rceil)\right\} \,\middle|\, \mH_0\textnormal{ true}\right]  & \le & (1+\delta) 2^{-nR-n H(\tau_1)}\cdot \omega \cdot |\set{A}|\cdot 2^{n\bigl(R-D(\tau_1\|P) + \epsilon + \lambda(\epsilon) \bigr)}. \IEEEeqnarraynumspace \label{eq:75new}
\end{IEEEeqnarray}
The bound \eqref{eq:defrstar} also remains valid. 
Combining \eqref{eq:75new} and \eqref{eq:defrstar}, we obtain
\begin{equation}\label{eq:Prationew}
\frac{\Pr \left[ \big\{x^n(1),\ldots,x^n(\lceil 2^{nR} \rceil)\big\} \,\middle|\, \mH_1\textnormal{ true}\right]}{\Pr \left[ \big\{x^n(1),\ldots,x^n(\lceil 2^{nR} \rceil)\big\} \,\middle|\, \mH_0 \textnormal{ true}\right]} \ge (1+\delta)^{-1} |\set{A}|^{-1} 2^{-n\left(r^* + 2\epsilon +\lambda(\epsilon)+\hat{\lambda}(\epsilon)\right)}.
\end{equation}
Combining \eqref{eq:Prationew}, \eqref{eq:H067}, and \eqref{eq:H169}, we obtain
\begin{IEEEeqnarray}{rCl}
\liminf_{n\to\infty} \frac{1}{n} \log \beta_n & \ge & \lim_{n\to\infty} \frac{1}{n} \log\frac{1}{(1+\delta)|\set{A}|} - r^* - \left(2\epsilon +\lambda(\epsilon)+\hat{\lambda}(\epsilon)\right)\\
& = & - r^* - \left(2\epsilon +\lambda(\epsilon)+\hat{\lambda}(\epsilon)\right).
\end{IEEEeqnarray}
When we let $\epsilon\downarrow 0$,  $r^*$ approaches 
\begin{equation}
\min_{P'\colon D(P'\| P)\le R} D(P'\|Q_1) = \theta_2,
\end{equation}
completing the proof. Notice that, whenever \eqref{eq:DleR} holds, $\theta_2\le \theta_1$ so the error exponent in the main formula \eqref{eq:main} equals~$\theta_2$.



\section{Final Remarks}\label{sec:conclusion}

We recall the connection between the problem studied above and that of distributed hypothesis testing. A random binning scheme for the latter problem was proposed in \cite{shimokawahanamari94} together with a decision rule; a better decision rule was later proposed in \cite{kochmanwang25}. The decision problem in \cite{shimokawahanamari94,kochmanwang25} is very similar to the one studied here, the main difference being that in \cite{shimokawahanamari94,kochmanwang25} the outlier sequence is \emph{conditionally} IID: its components are generated, given a fixed sequence $y^n$, independently according to conditional PMF $W_0(x|y)$ or $W_1(x|y)$ respectively under $\mH_0$ and $\mH_1$.  As noted earlier, the decision rule described by \eqref{eq:elltyp} and \eqref{eq:violate} is directly adapted from the one in \cite{kochmanwang25}; only the converse proof is new. Our converse proof readily extends to the conditionally-IID-outlier setting; we omit the details. Such a converse then implies that the decision rule in \cite{kochmanwang25} is optimal in its own setting, namely distributed hypothesis testing when the sender employs the random binning scheme of \cite{shimokawahanamari94}. 




\end{document}